\newtheorem{theorem}{Theorem}
\newtheorem{corollary}[theorem]{Corollary}
\title{A variant of the discrete Burgers equation derived from the correlated random walk and its ultradiscretization}
\author{{\small 
Akiko Fukuda,$^{1}$ 
\quad
Etsuo Segawa,$^{2}$ 
\quad 
Sennosuke Watanabe$^{3}$ 
}\\ 
{\scriptsize $^1$ 
Department of Mathematical Sciences, Shibaura Institute of Technology
}\\
{\scriptsize
Saitama, Saitama, 337-8570, Japan 
}\\
{\scriptsize $^{2}$ 
Graduate School of Environment and Information Sciences, Yokohama National University
}\\
{\scriptsize 
Hodogaya, Yokohama, 240-8501, Japan
} \\
{\scriptsize $^3$ 
Faculty of Informatics, The University of Fukuchiyama
}\\
{\scriptsize 
Fukuchiyama, Kyoto, 620-0886, Japan
} }
\date{}
\begin{document}

\maketitle
\begin{small}
\par\noindent
{\bf Abstract}. 
In this paper, we show that a variant of the discrete Burgers equation can be obtained through the Cole--Hopf transformation to a generalized discrete diffusion equation corresponding to the correlated random walk, which is also known as a generalization of the well known random walk. 
By applying the technique called ultradiscretization, we obtain the generalized ultradiscrete diffusion equation, the ultradiscrete Cole--Hopf transformation and a variant of the ultradiscrete Burgers equation.
Moreover, we show that the resulting ultradiscrete Burgers equation yields cellular automata which can be interpreted as a traffic flow model.

\footnote[0]{
{\it Key words and phrases.} 
Correlated random walk; 
Discrete Burgers equation; 
Cole--Hopf transformation; 
Ultradiscretization; 
Cellular automata. \\
\;\quad{\it MSC2010 }
37B15, 
39A12, 
60G50 
}

\end{small}

\section{Introduction}
It is well known that the Burgers equation can be obtained from the continuous diffusion equation by a variable transformation called the Cole--Hopf transformation \cite{Cole1951,Hopf}.
Moreover, it is shown in \cite {Hirota1979} that for the discrete diffusion equation
\begin{align}
f_{j}^{n+1} = \dfrac{1}{2}(f_{j+1}^{n} + f_{j-1}^{n}),\label{d_heat}
\end{align}
the discrete Cole--Hopf transformation
\begin{align}
u_{j}^{n} = \dfrac{f_{j+1}^{n}}{f_{j}^{n}}\label{d_ch}
\end{align}
yields the discrete Burgers equation
\begin{align}
u_{j}^{n+1} = u_{j}^{n}\dfrac{u_{j+1}^{n} + 1/u_{j}^{n}}{u_{j}^{n} + 1/u_{j-1}^{n}}.
\label{d_burgers}
\end{align}
Here, the discrete diffusion equation \eqref{d_heat} is equivalent to the evolution equation of the isotropic random walk.
Moreover, it is also known that a similar relationship can be obtained in \cite{Nishinari1998} by the ultradiscretization. 
The ultradiscrete diffusion equation 
\begin{align}
F_{j}^{n+1} = \max \left(F_{j-1}^{n},  F_{j+1}^{n}\right)\label{ud_heat}
\end{align}
transforms into the ultradiscrete Burgers equation
\begin{align}
U_{j}^{n+1}= U_{j}^{n} + \min \left(  U_{j-1}^{n},L-U_{j}^{n},  \right)  - \min \left(  U_{j}^{n}, L-U_{j+1}^{n}\right)\label{ud_burgers}
\end{align}
through the ultradiscrete Cole--Hopf transformation.
The ultradiscrete Burgers equation coincides with the elementary cellular automata of rule 184, which is regarded as a simple traffic flow model \cite{Nishinari1998}.
Originating from this model, 
some generalizations, interpretation as traffic flow, and their mathematical analyses are studied in \cite{Emmerich1998,Fukui2002}.
\par
The correlated random walk is a generalization of the random walk which depends 
on the previous position of a particle~\cite{Konno2009,Laing2001,Renshaw1981}.  
This model can be converted to a walk model in which the walker has a two-dimensional internal state 
 $\mathbb{C}^2$ whose bases are represented by $|L\rangle =[1,0]^\top$ and $|R\rangle=[0,1]^\top$, 
and the moving weights to the left and right neighbors 
are given by two-dimensional matrices $P$ and $Q$ so that $P+Q$ is a transition matrix which preserves the $\ell^1$ norm. 
A quantum walk~\cite{Kempe2003,Konno2018,Portugal2018} is known as a quantum analogue of the correlated random walk in that 
the matrix $P+Q$ is changed to a unitary matrix which preserves the $\ell^2$-norm.
In our previous work~\cite{LAA}, we introduced 
an analogue of the quantum walk over the max-plus algebra, which has a close relationship with the ultradiscretization, 
and found that the Frobenius norm on the max-plus algebra is a conservation quantity.
\par
In this paper, 
we focus on the correlated random walk and discuss the related equations and cellular automata.
We here refer to the evolution equation of the correlated random walk as the discrete correlated diffusion (dc-diffusion) equation.
We first show that by introducing a kind of discrete Cole--Hopf transformation, the dc-diffusion equation is transformed into an extension of the discrete Burgers equation, which we call the discrete correlated Burgers (dc-Burgers) equation. 
Secondly, by applying the ultradiscretization to the dc-diffusion equation and the dc-Burgers equation, we obtain the ultradiscrete correlated diffusion (udc-diffusion) equation and ultradiscrete correlated Burgers (udc-Burgers) equation, respectively. 
Moreover, we apply the ultradiscretization to the discrete Cole--Hopf transformation and show that the udc-Burgers equation is transformed into the udc-diffusion equation. 
We finally show that the resulting udc-Burgers equation can be regarded as a traffic flow model of cellular automata. 
\par
The rest of the paper is organized as follows.
In Section 2, we derive the dc-diffusion equation, discrete Cole--Hopf transformation and dc-Burgers equation.
In Section 3, we discuss the ultradiscretization of the dc-diffusion equation, discrete Cole--Hopf transformation and dc-Burgers equation.
In Section 4, we show that the udc-Burgers equation can be regarded as the cellular automata.
In Section 5, we show that the cellular automata shown in Section 4 can be regarded as the traffic flow.
The fundamental diagram of the traffic flow model is also discussed.
Finally in Section 6, concluding remarks are given.

\section{Discrete diffusion equation induced by a correlated random walk and its discrete Cole--Hopf transformation}
Let us consider the random walk model on the one-dimensional lattice in which 
a particle moves left and right neighbors with probabilities $1/2$ and $1/2$ respectively. 
The discrete diffusion equation \eqref{d_heat} is nothing but 
the time evolution of this random walk; that is, 
$f_j^n$ corresponds to the finding probability at discrete time $n$ and position $j$. 

Now let us consider the correlated random walk on the one-dimensional lattice which is a generalization of the above random walk. 
In the correlated random walk, the probabilities of a particle at each position associated with moving left and right 
are determined by the directions that a particle comes into the position. 
Then let $\mu_L^n(j)$ and $\mu_R^n(j)$ be the probabilities that a particle moves to the position $j$ from the positions $j+1$ and $j-1$ at discrete time $n$, respectively. 
The time evolution of the correlated random walk with parameters $p,q\in [0,1]$ is denoted by
\begin{align}
\begin{cases}
\mu_{L}^{n+1}(j) = p\mu_{L}^{n}(j+1)
 + (1-q) \mu_{R}^{n}(j+1),\\
\mu_{R}^{n+1}(j) = (1-p)\mu_{L}^{n}(j-1) + q \mu_{R}^{n}(j-1).\label{crw0}
\end{cases}
\end{align}
Here, the parameter $p$ (resp. $q$) is the probability that the leftward (resp. rightward) movement is repeated two times in a row, respectively. 
Note that if $p=q=1/2$, then the discrete diffusion equation~\eqref{d_heat} is recovered. 
\par
In the following, we set $p=q$ for simplicity. 
Then the second equation of \eqref{crw0} is reduced to 
\begin{align*}
\mu_{L}^{n}(j-1) = \dfrac{1}{1-p}(\mu_{R}^{n+1}(j) - p\mu_{R}^{n}(j-1)). 
\end{align*}
Inserting this into the first equation of \eqref{crw0}, we have the following recurrence equation of $\mu_R^{(n)}(j)$:
\begin{align}
&\dfrac{1}{1-p}(\mu_{R}^{n+2}(j+1) - p\mu_{R}^{n+1}(j))\nonumber\\
&\qquad  = \dfrac{p}{1-p}(\mu_{R}^{n+1}(j+2) - p\mu_{R}^{n}(j+1))
 + (1-p)\mu_{R}^{n}(j+1). \label{eq:mu_R}
\end{align}
Let us put $f_{j}^{n}:=\mu_{R}^{n}(j)$. 
The equation \eqref{eq:mu_R} implies 
\begin{align*}
f_{j+1}^{n+2} = pf_{j}^{n+1} +  pf_{j+2}^{n+1} + \{p(1-p)^{2}- p^2\} f_{j+1}^{n}
\end{align*}
Then we obtain the following three-term recurrence equation representing the correlated random walk: 
\begin{align}
f_{j}^{n+1} = p(f_{j-1}^{n} +  f_{j+1}^{n}) -(2p-1) f_{j}^{n-1}.\label{crw3}
\end{align}
If $p=1/2$, the time evolution of the usual random walk \eqref{d_heat} appears.
We call \eqref{crw3} the dc-diffusion equation. 

\par
Next, let us apply a discrete Cole--Hopf transformation to the time evolution equation of the correlated random walk \eqref{crw3}. 
Through a consideration of the usual discrete Cole--Hopf transformation \eqref{d_ch} of the random walk \eqref{d_heat},
we define another discrete Cole--Hopf transformation of the correlated random walk 
by the following pair of the spatial and temporal ratios of the finding probabilities: 
\begin{align}
u_{j}^{n} = \dfrac{f_{j+1}^{n}}{f_{j}^{n}},\quad 
v_{j}^{n} = \dfrac{f_{j}^{n+1}}{f_{j}^{n}}.\label{d_gch}
\end{align}
Then, inserting \eqref{crw3} into \eqref{d_gch}, we see that $u_j^{n}$ and $v_j^{n}$ must satisfy the following recursions:
\begin{align}
\begin{cases}
u_{j}^{n+1} =u_{j}^{n}\cdot \dfrac{p(u_{j+1}^{n} + 1/u_{j}^{n}) -(2p-1) \cdot 1/v_{j+1}^{n-1}}{p(u_{j}^{n} + 1/u_{j-1}^{n}) -(2p-1) \cdot 1/v_{j}^{n-1}},\\
v_{j}^{n+1}=v_{j}^{n}\cdot\dfrac{p(u_{j}^{n+1} + 1/u_{j-1}^{n+1}) -(2p-1)\cdot 1/v_{j}^{n}}{p(u_{j}^{n} + 1/u_{j-1}^{n} ) -(2p-1) \cdot 1/v_{j}^{n-1}}.\label{d_gburgers}
\end{cases}
\end{align}
We call (\ref{d_gburgers}) the dc-Burgers equation. 
Note that if we put $p=1/2$ in \eqref{d_gburgers}, the first equation \eqref{d_gburgers} coincides with the discrete Burgers equation \eqref{d_burgers}.

\section{
Ultradiscretization of the dc-diffusion equation and its Cole--Hopf transformation
}
In this section we will give an ultradiscretization of the dc-diffusion equation \eqref{crw3} and its Cole--Hopf transformation.

Let a variable transformation be
\begin{align}
f_{j}^{n} = p^{n}e^{F_{j}^{n}/\varepsilon}.\label{crw:ud_vt}
\end{align}
By applying \eqref{crw:ud_vt} to \eqref{crw3},  we obtain
\begin{align*}
e^{F_{j}^{n+1}/\varepsilon} = e^{F_{j-1}^{n}/\varepsilon} +  e^{F_{j+1}^{n}/\varepsilon} +\dfrac{1-2p}{p^2} e^{F_{j}^{n-1}/\varepsilon}.
\end{align*}
Here, we assume that $\frac{1-2p}{p^2} \geq 0$, that is, $0\leq p \leq1/2$, and let
\begin{align*}
\dfrac{1-2p}{p^2}=e^{R/\varepsilon},
\end{align*}
then we have
\begin{align*}
e^{F_{j}^{n+1}/\varepsilon} = e^{F_{j-1}^{n}/\varepsilon} +  e^{F_{j+1}^{n}/\varepsilon} +e^{R/\varepsilon} e^{F_{j}^{n-1}/\varepsilon}.
\end{align*}
Taking the $\log$ and multiplying both sides of the equation by $\varepsilon$, we have
\begin{align*}
\varepsilon\log e^{F_{j}^{n+1}/\varepsilon} = \varepsilon\log \left(e^{F_{j-1}^{n}/\varepsilon} +  e^{F_{j+1}^{n}/\varepsilon} +e^{R/\varepsilon} e^{F_{j}^{n-1}/\varepsilon}\right).
\end{align*}
Taking a limit $\varepsilon\to +0$ on the above equation, we obtain
\begin{align}
F_{j}^{n+1} = \max \left(F_{j-1}^{n},  F_{j+1}^{n}, R +F_{j}^{n-1}\right),\quad R\geq -\infty. \label{ud_gheat}
\end{align}
We call \eqref{ud_gheat} the udc-diffusion equation.
Here, if $R=-\infty$, then \eqref{ud_gheat} coincides with 
the ultradiscrete diffusion equation \eqref{ud_heat}.
\par
Next, we will apply the ultradiscretization to the discrete Cole--Hopf transformation \eqref{d_gch}.
Let us introduce new variable transformations as
\begin{align}
u_{j}^{n} = e^{(U_{j}^{n}-L/2)/\varepsilon},\quad
v_{j}^{n} = e^{(V_{j}^{n}-L/2)/\varepsilon}.\label{crw:ud_vt2}
\end{align}
By applying the variable transformation \eqref{crw:ud_vt2} into \eqref{d_gch}, we have
\begin{align*}
e^{(U_{j}^{n}-L/2)/\varepsilon} = \dfrac{e^{F_{j+1}^{n}/\varepsilon}}{e^{F_{j}^{n}/\varepsilon}},\quad
e^{(V_{j}^{n}-L/2)/\varepsilon} = \dfrac{e^{F_{j}^{n+1}/\varepsilon}}{e^{F_{j}^{n}/\varepsilon}}
.
\end{align*}
Taking the ultradiscrete limit, it holds that
\begin{align}
\begin{cases}
U_{j}^{n} = F_{j+1}^{n} - F_{j}^{n} +L/2,\\
V_{j}^{n} = F_{j}^{n+1} - F_{j}^{n} + L/2.\label{ud_gch}
\end{cases}
\end{align}
\par
Next, we apply the ultradiscrete Cole--Hopf transformation \eqref{ud_gch} to the udc-diffusion equation \eqref{ud_gheat}.
For the 1st equation of \eqref{ud_gch} at discrete time $n+1$, 
substitute \eqref{ud_gheat} for $F_{j}^{n+1}$
\begin{align*}
U_{j}^{n+1} &= \max \left(F_{j}^{n},  F_{j+2}^{n}, R +F_{j+1}^{n-1}\right) - \max \left(F_{j-1}^{n},  F_{j+1}^{n}, R +F_{j}^{n-1}\right) +L/2\\
&= (F_{j+1}^{n} -F_{j}^{n} +L/2) \notag\\
&\qquad+ \max \left(F_{j}^{n} -F_{j+1}^{n},  F_{j+2}^{n}-F_{j+1}^{n}, R +F_{j+1}^{n-1}-F_{j+1}^{n}\right)\notag \\
&\qquad - \max \left(F_{j-1}^{n}-F_{j}^{n}, F_{j+1}^{n}-F_{j}^{n}, R +F_{j}^{n-1}-F_{j}^{n}\right)\\
&= U_{j}^{n}+ \max \left( -U_{j}^{n}+L/2,  U_{j+1}^{n}-L/2, R -V_{j+1}^{n-1}+L/2\right) \notag\\
&\qquad - \max \left( -U_{j-1}^{n}+L/2,  U_{j}^{n}-L/2, R -V_{j}^{n-1}+L/2\right)\\
&= U_{j}^{n} + \max \left( -U_{j}^{n},  U_{j+1}^{n}-L, R -V_{j+1}^{n-1}\right) - \max \left( -U_{j-1}^{n},  U_{j}^{n}-L, R -V_{j}^{n-1}\right)\\
&= U_{j}^{n} + \min \left( U_{j-1}^{n},  L-U_{j}^{n}, V_{j}^{n-1}-R\right)- \min \left( U_{j}^{n},  L-U_{j+1}^{n},  V_{j+1}^{n-1}-R\right). 
\end{align*}
In the same way, the 2nd equation of \eqref{ud_gch} at discrete time $n+1$ is transformed into
\begin{align*}
V_{j}^{n+1} 
= V_{j}^{n}+ \min \left( U_{j-1}^{n},  L-U_{j}^{n},  V_{j}^{n-1}-R\right)-\min \left( U_{j-1}^{n+1},  L-U_{j}^{n+1}, V_{j}^{n}-R\right). 
\end{align*}
Therefore, the ultradiscrete Cole--Hopf transformation \eqref{ud_gch} of  udc-diffusion equation \eqref{ud_gheat} yields
\begin{align}
\begin{cases}
U_{j}^{n+1}= U_{j}^{n} + \min \left(  U_{j-1}^{n},L-U_{j}^{n},  V_{j}^{n-1}-R\right)  - \min \left(  U_{j}^{n}, L-U_{j+1}^{n},   V_{j+1}^{n-1}-R\right),\\
V_{j}^{n+1}= V_{j}^{n} + \min \left(   U_{j-1}^{n},L-U_{j}^{n},   V_{j}^{n-1}-R\right)-\min \left(  U_{j-1}^{n+1}, L-U_{j}^{n+1},  V_{j}^{n}-R\right).\label{ud_gburgers}
\end{cases}
\end{align}
We call \eqref{ud_gburgers} the ultradiscrete correlated Burgers (udc-Burgers) equation.
In the case that $R=-\infty$, the 1st equation of \eqref{ud_gburgers} coincides with \eqref{ud_burgers}. 
\par
Next we apply the ultradiscretization into the dc-Burgers equation \eqref{d_gburgers}. 
Let us recall \eqref{crw:ud_vt2} and introduce the new variable transformation
\begin{align*}
\dfrac{1-2p}{p}=e^{R/\varepsilon}.
\end{align*}
Then the 1st equation of \eqref{d_gburgers} yields
\begin{align*}
&e^{(U_{j}^{n+1}-L/2)/\varepsilon} \notag\\
&\quad =e^{(U_{j}^{n}-L/2)/\varepsilon}\cdot \dfrac{(e^{(U_{j+1}^{n}-L/2)/\varepsilon}+1/e^{(U_{j}^{n}-L/2)/\varepsilon}) +e^{R/\varepsilon} \cdot 1/e^{(V_{j+1}^{n-1}-L/2)/\varepsilon}}{(e^{(U_{j}^{n}-L/2)/\varepsilon}+1/e^{(U_{j-1}^{n}-L/2)/\varepsilon}) +e^{R/\varepsilon}\cdot 1/e^{(V_{j}^{n-1}-L/2)/\varepsilon}}.
\end{align*}
Taking the ultradiscrete limit into the above equation, we obtain
\begin{align*}
U_{j}^{n+1}=U_{j}^{n}&+
\min( L-U_{j}^{n},~U_{j-1}^{n},~V_{j}^{n-1}-R)\\
&-\min (L-U_{j+1}^{n}, ~U_{j}^{n},~ V_{j+1}^{n-1}-R).
\end{align*}
Similarly, the 2nd equation of \eqref{d_gburgers} yields 
\begin{align*}
V_{j}^{n+1}= V_{j}^{n}&-\min ( L-U_{j}^{n+1},~ U_{j-1}^{n+1},~ V_{j}^{n}-R)\\
&+\min ( L-U_{j}^{n},~ U_{j-1}^{n},~ V_{j}^{n-1}-R).
\end{align*}
We can see that the ultradiscrete limit
\begin{align*}
\begin{cases}
U_{j}^{n+1}  =U_{j}^{n}+
\min( U_{j-1}^{n},~L-U_{j}^{n},~V_{j}^{n-1}-R)
-\min ( U_{j}^{n},~L-U_{j+1}^{n},~ V_{j+1}^{n-1}-R),\\
V_{j}^{n+1}= V_{j}^{n}+\min ( U_{j-1}^{n},~L-U_{j}^{n},~ V_{j}^{n-1}-R)-\min ( U_{j-1}^{n+1},~L-U_{j}^{n+1},~ V_{j}^{n}-R)
\end{cases}
\end{align*}
of the dc-Burgers equation \eqref{d_gburgers}
coincides with \eqref{ud_gburgers} which is derived through the Cole--Hopf transformation \eqref{ud_gch} of the udc-diffusion equation \eqref{ud_gheat}.

\section{The ultradiscrete correlated Burgers equation and cellular automata}
In the udc-Burgers equation \eqref{ud_gburgers}, 
let $R=-\infty$, then the 1st equation,
\begin{align*}
U_{j}^{n+1}  =U_{j}^{n}+
\min( L-U_{j}^{n},~U_{j-1}^{n})
-\min (L-U_{j+1}^{n}, ~U_{j}^{n}),
\end{align*}
coincides with the ultradiscrete Burgers equation \eqref{ud_burgers},
which is equivalent to the elementary cellular automata of rule 184
if we set $L=1$ and the initial values are in $\{0,1\}$. 
Let $\tilde{V}_{j}^{n}:=V_{j}^{n}-R$. Then the udc-Burgers equation \eqref{ud_gburgers} transforms into
\begin{align}
\begin{cases}
U_{j}^{n+1}  =U_{j}^{n}+
\min( U_{j-1}^{n},~L-U_{j}^{n},~\tilde{V}_{j}^{n-1})
-\min ( U_{j}^{n},~L-U_{j+1}^{n},~ \tilde{V}_{j+1}^{n-1}),\\
\tilde{V}_{j}^{n+1}= \tilde{V}_{j}^{n}+\min (  U_{j-1}^{n},~L-U_{j}^{n},~ \tilde{V}_{j}^{n-1})-\min (  U_{j-1}^{n+1},~L-U_{j}^{n+1}, \tilde{V}_{j}^{n}),
\end{cases}
\label{udpb1}
\end{align}
in which the parameter $R$ is deleted.
For the udc-Burgers equation \eqref{ud_gburgers},
let $X_{j}^{n}=\min( U_{j-1}^{n},~L-U_{j}^{n},~\tilde{V}_{j}^{n-1})$;
then \eqref{udpb1} can be rewritten as
\begin{align}
\begin{cases}
U_{j}^{n+1}  =U_{j}^{n}+
X_{j}^{n}-X_{j+1}^{n},\\
\tilde{V}_{j}^{n+1}= \tilde{V}_{j}^{n}+X_{j}^{n}-X_{j}^{n+1}.
\end{cases}\label{udpb2}
\end{align}
Here we assume that the sites $j$ satisfy the periodic boundary condition with integer period $N$,
i.e., $U_{j}^{n}=U_{j+N}^{n}, \tilde{V}_{j}^{n}=\tilde{V}_{j+N}^{n}$.
In the following Theorem \ref{thm_CA}, we discuss the initial condition that the values of $U_{j}^{n}$ and $\tilde{V}_{j}^{n}$ for any $j$ are in the closed interval independent of $n$.
\begin{theorem}\label{thm_CA}
In the udc-Burgers equation \eqref{udpb1},
we assume that
\begin{align}
0\leq U_{j}^{0} \leq L,\quad
0\leq \tilde{V}_{j}^{-1} + \tilde{V}_{j}^{0}\leq L.\label{a1}
\end{align}
Then, for any discrete time $n\geq 1$, it holds that 
\begin{align}
0\leq U_{j}^{n} \leq L,\quad
0\leq \tilde{V}_{j}^{n}\leq L.\label{eq:thm1}
\end{align}
\end{theorem}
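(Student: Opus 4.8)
The plan is to work throughout with the reformulation \eqref{udpb2}, in which $X_j^n=\min(U_{j-1}^n,\,L-U_j^n,\,\tilde V_j^{n-1})$, and to first record the inequalities that hold for free because $X$ is a minimum: $X_j^n\le L-U_j^n$, $X_{j+1}^n\le U_j^n$, and $X_j^{n+1}\le \tilde V_j^n$. These require no hypothesis and carry most of the bounds once the arguments of the minima are known to be non-negative. The whole statement \eqref{eq:thm1} will then come out of a single induction on $n$ that controls $X$, $U$ and $\tilde V$ together.

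The structural fact that makes the $\tilde V$-bound possible is a per-site conservation law. Rearranging the second line of \eqref{udpb2} gives $\tilde V_j^{n+1}+X_j^{n+1}=\tilde V_j^n+X_j^n$, so $C_j:=\tilde V_j^n+X_j^n$ is independent of $n$. Evaluating at $n=0$ and using $X_j^0\le \tilde V_j^{-1}$ together with \eqref{a1}, I obtain $C_j=\tilde V_j^0+X_j^0\le \tilde V_j^{-1}+\tilde V_j^0\le L$. Consequently, as soon as $X_j^n\ge 0$ is available, $\tilde V_j^n=C_j-X_j^n\le C_j\le L$. It is worth emphasising that the upper bound on $\tilde V$ cannot be obtained step by step, since a single update may increase $\tilde V$; it is precisely the conservation law combined with $X\ge 0$ that pins $\tilde V$ below $L$.

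The induction then establishes, for every $j$ and every $n\ge 0$, the chain $X_j^n\ge 0$, then $0\le U_j^{n+1}\le L$, then $\tilde V_j^{n+1}\ge 0$. The base layer uses $0\le U_j^0\le L$ and the non-negativity of $\tilde V_j^{-1}$ and $\tilde V_j^0$. For the step, $X_j^n\ge 0$ holds because its three arguments are non-negative under the current hypotheses; the upper bound $U_j^{n+1}=U_j^n+X_j^n-X_{j+1}^n\le L$ follows from $U_j^n+X_j^n\le L$ and $X_{j+1}^n\ge 0$, while the lower bound follows from $U_j^n-X_{j+1}^n\ge 0$ (because $X_{j+1}^n\le U_j^n$) and $X_j^n\ge 0$; finally $\tilde V_j^{n+1}=\tilde V_j^n+X_j^n-X_j^{n+1}\ge 0$ because $\tilde V_j^n-X_j^{n+1}\ge 0$ and $X_j^n\ge 0$. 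Feeding the proven $X_j^n\ge 0$ into the conservation law then gives $\tilde V_j^n\le L$, which together with the induction completes \eqref{eq:thm1}.

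I expect the main difficulty to be bookkeeping rather than any single estimate: because $U$ and $\tilde V$ are coupled through $X$, each new bound depends on neighbouring bounds and on the previous time layer, so the deductions inside the inductive step must be discharged in a fixed order, namely $U^{n+1}$ must be bounded before $X^{n+1}$ can be signed, and $\tilde V^{n-1},\tilde V^n\ge 0$ must be in hand before $X^n,X^{n+1}$ can be. I would also flag that the individual non-negativity of $\tilde V_j^{-1}$ and $\tilde V_j^0$ is genuinely used, not merely their sum: a large negative $\tilde V_j^{-1}$ would make $X_j^0<0$ and could drive a neighbouring $U_j^1$ outside $[0,L]$, so I read \eqref{a1} as including $\tilde V_j^{-1},\tilde V_j^0\ge 0$.
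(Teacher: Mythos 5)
Your proof is correct and follows essentially the same route as the paper's: the same induction, the same min-inequalities (which the paper obtains by absorbing $U_j^n$ or $\tilde V_j^n$ into the minima), and your per-site conservation law $\tilde V_j^{n+1}+X_j^{n+1}=\tilde V_j^n+X_j^n$ is precisely the paper's telescoping of the second equation of \eqref{udpb2}, used in the same way to get $\tilde V_j^{n+1}\le L$ from \eqref{a1}. Your closing remark is also well taken: the paper's own inductive step needs $\tilde V_j^{-1},\tilde V_j^0\ge 0$ individually (not just their sum) at the base case, so \eqref{a1} must indeed be read as you propose, consistently with the hypotheses of Theorem \ref{thm_I}.
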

\begin{proof}
This proof is given by induction with respect to $n$.
We assume that \eqref{eq:thm1} holds true for the discrete time less than or equal to $n$.
In the 1st equation of \eqref{udpb1}, we have
\begin{align}
U_{j}^{n+1}  &=U_{j}^{n}+
\min( U_{j-1}^{n},~L-U_{j}^{n},~\tilde{V}_{j}^{n-1})
-\min (U_{j}^{n},~L-U_{j+1}^{n}, ~ \tilde{V}_{j+1}^{n-1})\notag\\
&=
\min( U_{j-1}^{n}+U_{j}^{n},~L,~\tilde{V}_{j}^{n-1}+U_{j}^{n})
-\min (U_{j}^{n},~L-U_{j+1}^{n}, ~ \tilde{V}_{j+1}^{n-1}).\label{13.13}
\end{align}
The 1st term of \eqref{13.13} yields
\begin{align*}
\min( U_{j-1}^{n}+U_{j}^{n},~L,~\tilde{V}_{j}^{n-1}+U_{j}^{n})
\leq L,
\end{align*}
and, from the assumption of the induction, the 2nd term holds
\begin{align*}
\min (U_{j}^{n},~L-U_{j+1}^{n}, ~ \tilde{V}_{j+1}^{n-1})\geq 0.
\end{align*}
Therefore we have $U_{j}^{n+1}\leq L$.
On the other hand, the 1st equation of \eqref{udpb1} can also be written as
\begin{align*}
U_{j}^{n+1}  &=
\min( U_{j-1}^{n},~L-U_{j}^{n},~\tilde{V}_{j}^{n-1})
-\min (0,~L-U_{j+1}^{n}-U_{j}^{n}, ~ \tilde{V}_{j+1}^{n-1}-U_{j}^{n}).
\end{align*}
Here, the 1st term of the right-hand side satisfies
\begin{align*}
\min( U_{j-1}^{n},~L-U_{j}^{n},~\tilde{V}_{j}^{n-1})\geq 0,
\end{align*}
and the 2nd term satisfies
\begin{align*}
\min (0,~L-U_{j+1}^{n}-U_{j}^{n}, ~ \tilde{V}_{j+1}^{n-1}-U_{j}^{n})\leq 0.
\end{align*}
Hence we have $U_{j}^{n+1}\geq 0$.
Therefore we obtain $0\leq U_{j}^{n+1}\leq L$. 
\par
In the 2nd equation of \eqref{udpb1}, we have
\begin{align}
\tilde{V}_{j}^{n+1}&= \tilde{V}_{j}^{n}+\min ( U_{j-1}^{n},~L-U_{j}^{n},~  \tilde{V}_{j}^{n-1})-\min (  U_{j-1}^{n+1},~L-U_{j}^{n+1},~ \tilde{V}_{j}^{n})\notag\\
&= \min (  U_{j-1}^{n},~L-U_{j}^{n},~ \tilde{V}_{j}^{n-1})-\min ( U_{j-1}^{n+1}-\tilde{V}_{j}^{n},~L-U_{j}^{n+1}-\tilde{V}_{j}^{n},~  0).\label{eq:thm1:V}
\end{align}
Here, from the assumption of the induction, the 1st term of \eqref{eq:thm1:V} holds
\begin{align*}
\min ( U_{j-1}^{n},~ L-U_{j}^{n},~ \tilde{V}_{j}^{n-1})\geq 0,
\end{align*}
and the 2nd term holds
\begin{align*}
\min (  U_{j-1}^{n+1}-\tilde{V}_{j}^{n},~L-U_{j}^{n+1}-\tilde{V}_{j}^{n},~ 0)\leq 0.
\end{align*}
Hence we have $\tilde{V}_{j}^{n+1}\geq 0$.
In the 2nd equation of \eqref{udpb2},
we take the summation from $0$ to $n$ on both sides of the equation, and then we have
\begin{align*}
\sum_{p=0}^{n}
(\tilde{V}_{j}^{p+1}-\tilde{V}_{j}^{p})
=\sum_{p=0}^{n}(X_{j}^{p}-X_{j}^{p+1}).
\end{align*}
Here, the left- and right-hand sides are $-\tilde{V}_{j}^{0} + \tilde{V}_{j}^{n+1}$ and $X_{j}^{0}-X_{j}^{n+1}$, respectively.
Therefore we have
\begin{align*}
\tilde{V}_{j}^{n+1} &= X_{j}^{0}-X_{j}^{n+1}+\tilde{V}_{j}^{0}\\
&=\min (  U_{j-1}^{0},~L-U_{j}^{0},~ \tilde{V}_{j}^{-1})
-\min (  U_{j-1}^{n+1},~L-U_{j}^{n+1},~ \tilde{V}_{j}^{n})
+\tilde{V}_{j}^{0}\notag\\
&=\min (  U_{j-1}^{0}+\tilde{V}_{j}^{0}
,~L-U_{j}^{0}+\tilde{V}_{j}^{0}
,~ \tilde{V}_{j}^{-1}+\tilde{V}_{j}^{0}
)
-\min ( U_{j-1}^{n+1},~  L-U_{j}^{n+1},~\tilde{V}_{j}^{n}).
\end{align*}
From the assumption \eqref{a1}, the 1st term of the above equation holds 
\begin{align*}
\min (  U_{j-1}^{0}+\tilde{V}_{j}^{0}
,~L-U_{j}^{0}+\tilde{V}_{j}^{0}
,~ \tilde{V}_{j}^{-1}+\tilde{V}_{j}^{0}
)\leq L
\end{align*}
and $0\leq U_{j}^{n+1}\leq L$, which we already proved, and the 2nd term holds
\begin{align*}
\min (  U_{j-1}^{n+1},~L-U_{j}^{n+1},~ \tilde{V}_{j}^{n})\geq 0.
\end{align*}
So we have $\tilde{V}_{j}^{n+1}\leq L$.
Therefore we have $0\leq\tilde{V}_{j}^{n+1}\leq L$.
\end{proof}
From the Theorem \ref{thm_CA}, we see that the udc-Burgers equation \eqref{udpb1} can be regarded as the cellular automata with state $U_{j}^{n}\in\{0,1,\dots,L\}$ and $\tilde{V}_{j}^{n}\in\{0,1,\dots,L\}$.
From the 2nd equation of \eqref{udpb2}, we have
\begin{align}
\tilde{V}_{j}^{n}
=-X_{j}^{n}+\min( U_{j-1}^{0},~L-U_{j}^{0},~\tilde{V}_{j}^{-1})+\tilde{V}_{j}^{0}.\label{tildeV}
\end{align}
Here, we introduce
\begin{align*}
I_{j}^{0}:=\min( U_{j-1}^{0},~L-U_{j}^{0},~\tilde{V}_{j}^{-1})+\tilde{V}_{j}^{0},
\end{align*}
which is determined by the initial values of the udc-Burgers equation.
Hence we have the following theorem.
\begin{theorem}\label{thm_I}
In the udc-Burgers equation \eqref{udpb1},
let us assume that $0\leq U_{j}^{0}\leq L$, $ \tilde{V}_{j}^{-1}\geq 0$ and $\tilde{V}_{j}^{0}\geq 0$.
Then, for any discrete time $n\geq 1$, it holds that 
\begin{align}
0\leq U_{j}^{n}\leq L,\quad 
0\leq \tilde{V}_{j}^{n}\leq I_{j}^{0}.\label{th3eq}
\end{align}
\end{theorem}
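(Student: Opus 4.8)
The plan is to run a single induction on $n$ that controls $U_j^n$ from both sides and $\tilde V_j^n$ from below, and then to read off the upper bound on $\tilde V_j^n$ directly from the telescoped identity \eqref{tildeV} rather than through a separate inductive argument. The three given inequalities $0\le U_j^0\le L$, $\tilde V_j^{-1}\ge 0$ and $\tilde V_j^0\ge 0$ furnish the base data. Because the update of $U^{n+1}$ in \eqref{udpb1} reaches back to time level $n-1$ through $\tilde V^{n-1}$, the induction hypothesis must carry $0\le U_j^m\le L$ for all $m\le n$ together with $\tilde V_j^m\ge 0$ for all $-1\le m\le n$, uniformly in $j$.

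First I would reprove $0\le U_j^{n+1}\le L$ exactly as in the proof of Theorem~\ref{thm_CA}. The rewriting \eqref{13.13} gives $U_j^{n+1}\le L$ because its leading $\min$ has $L$ among its arguments while its trailing $\min$ is nonnegative under the hypotheses $U_j^n\ge 0$, $U_{j+1}^n\le L$ and $\tilde V_{j+1}^{n-1}\ge 0$; the dual rewriting used just after \eqref{13.13} gives $U_j^{n+1}\ge 0$ symmetrically. I would then get $\tilde V_j^{n+1}\ge 0$ from the form \eqref{eq:thm1:V} of the second equation of \eqref{udpb1}, where the first $\min$ is $\ge 0$ by the hypothesis and the second $\min$ is $\le 0$ because $0$ is one of its arguments. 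The point is that none of these three estimates invokes the discarded upper bound $\tilde V_j^{-1}+\tilde V_j^0\le L$ of assumption \eqref{a1}; only the $\tilde V\le L$ part of Theorem~\ref{thm_CA} used it. Hence this portion of the earlier proof transfers verbatim.

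For the new upper bound I would avoid a fresh induction and instead invoke \eqref{tildeV}, which the telescoping of the second equation of \eqref{udpb2} already writes as $\tilde V_j^n = I_j^0 - X_j^n$ with $X_j^n=\min(U_{j-1}^n,\,L-U_j^n,\,\tilde V_j^{n-1})$. Since the bounds just established give $U_{j-1}^n\ge 0$, $L-U_j^n\ge 0$ and $\tilde V_j^{n-1}\ge 0$ for every $n\ge 1$ (using the given $\tilde V_j^0\ge 0$ at $n=1$), we obtain $X_j^n\ge 0$, and therefore $\tilde V_j^n\le I_j^0$. Together with $\tilde V_j^n\ge 0$ this is precisely \eqref{th3eq}.

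The only genuine difficulty is bookkeeping the mixed time dependence of \eqref{udpb1}: one must verify at each step that the indices actually called upon ($n$ and $n-1$ for the $U$-update, and $n$, $n-1$, $n+1$ for the $\tilde V$-update) are already covered by the running hypothesis before they are used, and that the base data at $n=-1,0$ feed correctly into the first few steps. Once that indexing is arranged, every estimate is a one-line min/max inequality, and the single observation $X_j^n\ge 0$ carries the entire content of the upper bound on $\tilde V_j^n$.
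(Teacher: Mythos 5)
Your proof is correct and takes essentially the same route as the paper's: the bounds $0\le U_{j}^{n}\le L$ and $\tilde{V}_{j}^{n}\ge 0$ via the induction of Theorem~\ref{thm_CA}, and the upper bound $\tilde{V}_{j}^{n}\le I_{j}^{0}$ via the telescoped identity \eqref{V=-X+I} combined with $X_{j}^{n}\ge 0$. Your explicit verification that the portions of Theorem~\ref{thm_CA}'s induction being reused never invoke the discarded hypothesis $\tilde{V}_{j}^{-1}+\tilde{V}_{j}^{0}\le L$ is exactly the point the paper leaves implicit when it simply cites that theorem under weaker assumptions, so your write-up is if anything slightly more careful.
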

\begin{proof}
The first equation of \eqref{th3eq} is already proved in Theorem \ref{thm_CA}.
For the 2nd equation of \eqref{th3eq}, 
from \eqref{tildeV}, we easily have
\begin{align}
\tilde{V}_{j}^{n}=-X_{j}^{n}+I_{j}^{0}.\label{V=-X+I}
\end{align}
From $X_{j}^{n}\geq 0$, we have $\tilde{V}_{j}^{n}\leq I_{j}^{0}$.
\end{proof}

\begin{corollary}\label{Cor}
Let $M:=\displaystyle\max_{j\in \mathbb{Z}}I_{j}^{0}<\infty$; then the udc-Burgers equation \eqref{udpb1} yields cellular automata of 
$(U_{j}^{n},\tilde{V}_{j}^{n})\in \{0,1,\dots,L\}\times \{0,1,\dots, M\}$.
\end{corollary}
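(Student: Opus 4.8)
The plan is to read this off from Theorem \ref{thm_I} by upgrading its site-dependent bound to a uniform, finite, integer-valued one. First I would recall that, under the hypotheses of Theorem \ref{thm_I} (namely $0\leq U_j^0\leq L$, $\tilde{V}_j^{-1}\geq 0$ and $\tilde{V}_j^0\geq 0$), every iterate satisfies $0\leq U_j^n\leq L$ and $0\leq \tilde{V}_j^n\leq I_j^0$ for all $j$ and all $n\geq 1$. Maximizing over $j$ and invoking the definition $M=\max_{j\in\mathbb{Z}}I_j^0$ replaces the site-dependent upper bound $I_j^0$ on $\tilde{V}_j^n$ by the single bound $0\leq \tilde{V}_j^n\leq M$, valid for every $j$. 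The standing hypothesis $M<\infty$ is precisely what guarantees that the box $[0,L]\times[0,M]$ is bounded; this is the only place the uniform finiteness assumption is used.

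Second, I would verify that the evolution preserves integrality, so that the orbit lands on the integer points of that box. Writing the dynamics in the form \eqref{udpb2}, each update of $U_j^{n+1}$ and of $\tilde{V}_j^{n+1}$ is assembled solely from the auxiliary quantity $X_j^n=\min(U_{j-1}^n,\,L-U_j^n,\,\tilde{V}_j^{n-1})$ together with additions and subtractions of state values and of $L$. Since $\min$, $+$ and $-$ all map integers to integers, a straightforward induction on $n$ shows that if $L$ and the initial data $U_j^0,\tilde{V}_j^{-1},\tilde{V}_j^0$ are integers, then every pair $(U_j^n,\tilde{V}_j^n)$ is integer-valued. Combining this with the bounds of the previous paragraph places every pair in the finite set $\{0,1,\dots,L\}\times\{0,1,\dots,M\}$.

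Finally I would argue that \eqref{udpb1} thereby defines a cellular automaton on this finite alphabet. The update of $(U_j^{n+1},\tilde{V}_j^{n+1})$ depends only on finitely many nearby cells ($U_{j-1}^n,U_j^n,U_{j+1}^n,\tilde{V}_j^{n-1},\tilde{V}_{j+1}^{n-1}$ and the already-computed $U_{j-1}^{n+1},U_j^{n+1},\tilde{V}_j^n$) and is invariant under the spatial shift $j\mapsto j+1$, which is exactly the data of a translation-invariant local transition rule over a finite state space. The only structural wrinkle is that \eqref{udpb1} is second order in time; strictly speaking I would recast it as a genuinely first-order cellular automaton by augmenting the state to record two consecutive time slices, e.g. $(U_j^n,\tilde{V}_j^{n-1},\tilde{V}_j^n)$, from which the transition rule of \eqref{udpb1} reads off directly.

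The quantitative work has already been done in Theorems \ref{thm_CA} and \ref{thm_I}, so the corollary is largely bookkeeping; I expect no serious obstacle. The only genuinely new ingredients are the passage from the per-site bound $I_j^0$ to the uniform bound $M$ (immediate once $M<\infty$ is assumed) and the integrality induction, neither of which is delicate. If anything, the subtlest point is articulating cleanly why a temporally second-order max-plus recurrence on a finite alphabet qualifies as a cellular automaton, and this is settled by the state augmentation above.
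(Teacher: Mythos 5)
Your proposal is correct and follows the same route the paper intends: the paper states Corollary \ref{Cor} without a separate proof, treating it as immediate from Theorem \ref{thm_I} by replacing the site-dependent bound $I_j^0$ with $M=\max_j I_j^0$, exactly as you do. Your added bookkeeping (the integrality induction via $\min$, $+$, $-$, and the state augmentation handling the second-order-in-time recurrence) simply makes explicit what the paper leaves implicit, and is consistent with the paper's remark that the proofs never actually require integer values.
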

\par
Fuzzy cellular automata are known as cellular automata that extend the range of elementary cellular automata from $\{0,1\}$ to $[0,1]\subset \mathbb{R}$ \cite{floccini}.
It is known that the straightforward fuzzification of generalized cellular automata whose range is $\{0,1,\dots,L\}$ is difficult. 
Recently, a skillful fuzzification technique for generalized cellular automata was introduced by some of authors in \cite{JCA}.
However, in the proof of Theorem \ref{thm_CA},  
$U_{j}^{n}$ and $\tilde{V}_{j}^{n}$ are not assumed to be integers. 
Therefore, it is remarkable to note that the range 
$(U_{j}^{n},\tilde{V}_{j}^{n})\in \{0,1,\dots,L\}\times \{0,1,\dots, M\}$
can be directly generalized into $(U_{j}^{n},\tilde{V}_{j}^{n})\in [0,L] \times [0,M]$ in Corollary \ref{Cor}.

\section{The correlated Burgers cellular automata}
Let us refer to the cellular automata derived from the udc-Burgers equation \eqref{udpb1} as the correlated Burgers cellular automata (c-BCA).
In this section, we show that the c-BCA can be regarded as a kind of traffic flow model.
We also discuss the fundamental diagram of the traffic flow model.
\subsection{Interpretation of the c-BCA as a traffic flow model}
Let $U_{j}^{n}$ be the number of cars at site $j$ and discrete time $n$ and 
let $\tilde{V}_{j}^{n-1}$ denote the maximum inflow, namely, the upper limit of the number of cars that can move from site $j-1$ to $j$ in the transition from discrete time $n$ to $n+1$.
We here assume that each site contains at most $L$ cars.
Then, $X_{j}^{n} = \min(U_{j-1}^{n},~L-U_{j}^{n},~\tilde{V}_{j}^{n-1})$ represents the inflow toward site $j$ at discrete time $n$.
Therefore the 1st equation of the udc-Burgers equation \eqref{udpb2} represents the update of the number of cars as follows.
\begin{align*}
\underbrace{U_{j}^{n+1}}_{\substack{{\rm number~of ~cars}\\{\rm at~discrete~time}~n+1~\\{\rm and~site}~j}}
=
\underbrace{U_{j}^{n}}_{\substack{{\rm number~of ~cars}\\{\rm at~discrete~time}~n~\\{\rm and~site}~j}}
+
\underbrace{X_{j}^{n}}_{\substack{{\rm inflow~toward~site~}j \\{\rm at~discrete~time~}n}}
-\underbrace{X_{j+1}^{n}.}_{\substack{{\rm outflow~toward~site~}j+1\\{\rm at~discrete~time~}n}}
\end{align*}
The following theorem states that the udc-Burgers equation \eqref{udpb1} follows the conservation law.
\begin{theorem}\label{thm:CQ}
Let $N$ be the number of sites and impose the periodic boundary condition on the udc-Burgers equation \eqref{udpb1}.
Then, in \eqref{udpb1}, for any discrete time $n$, it holds that 
\begin{align*}
\sum_{j=0}^{N-1} U_{j}^{n+1}  =\sum_{j=0}^{N-1}U_{j}^{n}.
\end{align*}
\end{theorem}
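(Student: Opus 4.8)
The plan is to exploit the conservative, flux-difference form of the update rule rather than to work directly with \eqref{udpb1}. Recall that the first equation of \eqref{udpb2} already writes the increment as $U_{j}^{n+1}-U_{j}^{n}=X_{j}^{n}-X_{j+1}^{n}$, where $X_{j}^{n}=\min(U_{j-1}^{n},\,L-U_{j}^{n},\,\tilde{V}_{j}^{n-1})$. In other words, the evolution of $U$ has the shape of a discrete continuity equation, with $X_{j}^{n}$ playing the role of the flux across the bond between sites $j-1$ and $j$. This is the structural fact I would emphasize at the outset, since it is exactly what makes the total mass $\sum_{j}U_{j}^{n}$ a conserved quantity.

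First I would sum the identity $U_{j}^{n+1}-U_{j}^{n}=X_{j}^{n}-X_{j+1}^{n}$ over one spatial period, $j=0,1,\dots,N-1$, which gives $\sum_{j=0}^{N-1}(U_{j}^{n+1}-U_{j}^{n})=\sum_{j=0}^{N-1}(X_{j}^{n}-X_{j+1}^{n})$. The right-hand side telescopes to $X_{0}^{n}-X_{N}^{n}$. The key step is then to observe that the periodic boundary condition $U_{j+N}^{n}=U_{j}^{n}$ and $\tilde{V}_{j+N}^{n}=\tilde{V}_{j}^{n}$, imposed before \eqref{udpb2}, forces $X_{j}^{n}$ itself to be $N$-periodic in $j$, because $X_{j}^{n}$ is built solely out of these periodic fields. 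In particular $X_{N}^{n}=X_{0}^{n}$, so the telescoped boundary terms cancel and the flux sum vanishes. Rearranging leaves $\sum_{j=0}^{N-1}U_{j}^{n+1}=\sum_{j=0}^{N-1}U_{j}^{n}$, as claimed, for every discrete time $n$.

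There is no genuine obstacle here: the statement is a direct consequence of the conservative flux-difference structure together with periodicity, and it does not require the bounds established in Theorem \ref{thm_CA}. The only point deserving explicit mention—and the one I would be careful to record before invoking the cancellation—is the $N$-periodicity of $X_{j}^{n}$, which must be deduced from the periodicity of $U$ and $\tilde{V}$ rather than simply assumed. Once that observation is in place, the telescoping argument closes the proof immediately.
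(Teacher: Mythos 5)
Your proposal is correct and follows essentially the same route as the paper's proof: summing the flux-difference form of \eqref{udpb2} over one period, telescoping, and cancelling the boundary terms via periodicity. Your explicit remark that the $N$-periodicity of $X_{j}^{n}$ must be deduced from that of $U_{j}^{n}$ and $\tilde{V}_{j}^{n}$ is a small point of care that the paper leaves implicit, but it is not a different argument.
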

\begin{proof}
In the 1st equation of \eqref{udpb2}, let us take the summation from 0 to $N$.
Then we have 
\begin{align*}
\sum_{j=0}^{N-1} U_{j}^{n+1}  =\sum_{j=0}^{N-1}U_{j}^{n}+
\sum_{j=0}^{N-1}(X_{j}^{n}-X_{j+1}^{n}).
\end{align*}
From $\sum_{j=0}^{N-1}(X_{j}^{n}-X_{j+1}^{n})=X_{0}^{n}-X_{N}^{n}$ and considering the periodic boundary condition,
we have
\begin{align*}
\sum_{j=0}^{N-1} U_{j}^{n+1}  =\sum_{j=0}^{N-1}U_{j}^{n}.
\end{align*}
\end{proof}
From Theorem \ref{thm:CQ}, we see that the summation of the initial values $U_{j}^{0},~j=0,1,\dots,N-1$ are conserved quantities, which means that the total number of cars does not change with time.
\par

Note that if $p=1/2$, that is, the simple random walk case, the inflow $X_j^n$ is reduced to just $\min(U_{j-1}^n, L-U_j^n)$ without any influence of $\tilde{V}_j^n$ since $\tilde{V}_j^n=+\infty$. 
Thus the finiteness of $\tilde{V}_j^n$ leads to the characterization of ``correlated" random walk. 
Then let us consider $\tilde{V}_j^n$ more precisely in the following.
Hereinafter, let $\tilde{V}_{j}^{-1}=0$ for all $j$, then it holds that $ I_{j}^{0} = \tilde{V}_{j}^{0}$.
Namely, for all discrete time $n\geq 1$, $0<\tilde{V}_{j}^{n}\leq \tilde{V}_{j}^{0}$ holds in Theorem \ref{thm_I}.
Therefore, under the condition $\tilde{V}_{j}^{-1}=0$, we can give arbitrary initial values $\tilde{V}_{j}^{0}$, which means that, 
for each $j$, we can control the upper bounds of $\tilde{V}_{j}^{n}$ by the initial values $\tilde{V}_{j}^{0}$.
Then the 2nd equation of the udc-Burgers equation \eqref{udpb2}, which is transformed into \eqref{V=-X+I}, can be interpreted as 
\begin{align*}
\underbrace{\tilde{V}_{j}^{n}}_{\substack{{\rm maximum~inflow~of}\\{\rm transition~from~}n+1{\rm~to~}n+2\\{\rm ~at~site~}j-1{\rm ~to~}j}} 
= \underbrace{V_j^{0}}_{\substack{{\rm potential~quantity }\\{\rm ~of~movement~at~site~}j }} - \underbrace{X_{j}^{n},}_{\substack{{\rm inflow~toward~site~}j \\{\rm at~discrete~time~}n}}
\end{align*}
which means that the maximum inflow of transition at the next time (from $n+1 \to n+2$) can be determined by subtracting the amount of inflow to site $j$ at the current time ($n\to n+1$) from the initially determined inflow limit $\tilde{V}_j^{0}$ at site $j$.
That is, whereas the maximum inflow $\tilde{V}_j^{n}$ at discrete time $n$ decreases depending on the previous inflow $\tilde{X}_j^{n}$, the upper bound of $\tilde{V}_j^{n}$ is determined by $\tilde{V}_j^{0}$.
Therefore the inflow toward site $j$ at time $n$ is described by
\begin{align*}  
X_j^n = \min (U_{j-1}^n, L-U_j^n, V_j^0-X_j^{n-1} ).
\end{align*}
The  dependence on the direction at the previous time step of the correlated random walk appears as the dependence on the inflow at the previous time step $V_j^0 - X_j^{n-1}<\infty$ of this cellular automaton. 
\par
In the following, we give explanations of the movement of cars in the 
udc-Burgers equation \eqref{udpb1} in the case of $L=1$, $\tilde{V}_{j}^{-1}=0$ and $\tilde{V}_{j}^{0}\in\{0,1\}$ for all $j$.
From the 1st equation of \eqref{udpb1} and \eqref{udpb2} of $L=1$, the values of $X_{j}^{n}$, $X_{j+1}^{n}$ and resulting $U_{j}^{n+1}$
according to the combinations of the values $U_{j-1}^{n}$, $U_{j}^{n}$, $U_{j+1}^{n}$, $\tilde{V}_{j}^{n-1}$ and  $\tilde{V}_{j+1}^{n-1}$ are shown in Table \ref{table1_temp2}.
\begin{table}\caption{$X_{j}^{n}-X_{j+1}^{n}$\label{table1_temp2}}
\begin{center}
\begin{tabular}{cccc|cc|cc|c|c}
No.&$U_{j-1}^{n}$&$U_{j}^{n}$&$U_{j+1}^{n}$&$\tilde{V}_{j}^{n-1}$&$\tilde{V}_{j+1}^{n-1}$&$X_{j}^{n}$&$X_{j+1}^{n}$
&$U_{j}^{n+1}$&Cases\\ \hline
1	&	0	&	0	&	0	&	0	&	0	&	0	&	0		&0&I\\
2	&	0	&	0	&	0	&	0	&	1	&	0	&	0		&0&I\\
3	&	0	&	0	&	0	&	1	&	0	&	0	&	0		&0&I\\
4	&	0	&	0	&	0	&	1	&	1	&	0	&	0		&0&I\\
5	&	0	&	0	&	1	&	0	&	0	&	0	&	0		&0&I\\
6	&	0	&	0	&	1	&	0	&	1	&	0	&	0		&0&I\\
7	&	0	&	0	&	1	&	1	&	0	&	0	&	0		&0&I\\
8	&	0	&	0	&	1	&	1	&	1	&	0	&	0		&0&I\\
9	&	0	&	1	&	0	&	0	&	0	&	0	&	0		&1&VI\\
10	&	0	&	1	&	0	&	0	&	1	&	0	&	1	 &0&IV\\
11	&	0	&	1	&	0	&	1	&	0	&	0	&	0		&1&VI\\
12	&	0	&	1	&	0	&	1	&	1	&	0	&	1		&0&IV\\
13	&	0	&	1	&	1	&	0	&	0	&	0	&	0		&1&II\\
14	&	0	&	1	&	1	&	0	&	1	&	0	&	0		&1&II\\
15	&	0	&	1	&	1	&	1	&	0	&	0	&	0		&1&II\\
16	&	0	&	1	&	1	&	1	&	1	&	0	&	0		&1&II\\
17	&	1	&	0	&	0	&	0	&	0	&	0	&	0		&0&V\\
18	&	1	&	0	&	0	&	0	&	1	&	0	&	0		&0&V\\
19	&	1	&	0	&	0	&	1	&	0	&	1	&	0		&1&III\\
20	&	1	&	0	&	0	&	1	&	1	&	1	&	0		&1&III\\
21	&	1	&	0	&	1	&	0	&	0	&	0	&	0		&0&V\\
22	&	1	&	0	&	1	&	0	&	1	&	0	&	0		&0&V\\
23	&	1	&	0	&	1	&	1	&	0	&	1	&	0		&1&III\\
24	&	1	&	0	&	1	&	1	&	1	&	1	&	0		&1&III\\
25	&	1	&	1	&	0	&	0	&	0	&	0	&	0		&1&VI\\
26	&	1	&	1	&	0	&	0	&	1	&	0	&	1		&0&IV\\
27	&	1	&	1	&	0	&	1	&	0	&	0	&	0		&1&VI\\
28	&	1	&	1	&	0	&	1	&	1	&	0	&	1		&0&IV\\
29	&	1	&	1	&	1	&	0	&	0	&	0	&	0		&1&II\\
30	&	1	&	1	&	1	&	0	&	1	&	0	&	0		&1&II\\
31	&	1	&	1	&	1	&	1	&	0	&	0	&	0		&1&II\\
32	&	1	&	1	&	1	&	1	&	1	&	0	&	0		&1&II\\
\end{tabular}
\end{center}
\end{table}
In case I of Table \ref{table1_temp2},
since there are no cars at site $j-1$ and $j$,
there are no movements of cars
independent of the values of $\tilde{V}_{j}^{n-1}$.
In case II, 
there are cars at sites $j$ and $j+1$.
So the cars at site $j$ can not move to $j+1$ independent of the values of $\tilde{V}_{j}^{n-1}$.
In case III, 
since there is a car at site $j-1$, the site $j$ is empty, and the maximum inflow $\tilde{V}_{j}^{n-1}=1$,
the car at site $j-1$ moves to site $j$ at discrete time $n+1$.
In case IV, 
since there is a car at site $j$, the site $j+1$ is empty, and the maximum inflow $\tilde{V}_{j+1}^{n-1}=1$,
the car at site $j$ moves to site $j+1$ at discrete time $n+1$.
In case V, 
there is a car at site $j-1$ and the site $j$ is empty.
But the car at site $j-1$ can not move because $\tilde{V}_{j}^{n-1}=0$.
In case VI, 
there is a car at site $j$ and the site $j+1$ is empty.
But the car at site $j$ can not move because $\tilde{V}_{j+1}^{n-1}=0$. 
\par
In Figure \ref{fig:L=1}, we share the figures of the time evolution of $U_{j}^{n}$ and $\tilde{V}_{j}^{n}$ in the case that $L=1$. 
The horizontal and vertical axis denote site $j=0,1,\dots,29$ and discrete time $t=-1, 0,1,\dots,28$, from the top, respectively.
The white and black cells denote $0$ and $1$, respectively. 
In Figure \ref{fig:L=1}
the initial values of $\tilde{V}_{j}^{0}$ are set as
$\tilde{V}_{21}^{0}=0$
and 
$\tilde{V}_{j}^{0}=1$ for other $j$. 
For any $j$, $U_{j}^{-1}$ is set to $0$ as a matter of convenience.
The initial values of $U_{j}^{0}$ are set as 0 or 1 randomly.
At the point where $\tilde{V}_{21}^{t}=0$, we can  observe that the car $U_{20}^{t}$ can not move forward and a traffic jam occurs.

\begin{figure}[htbp]
\begin{center}
\includegraphics[scale=0.4]{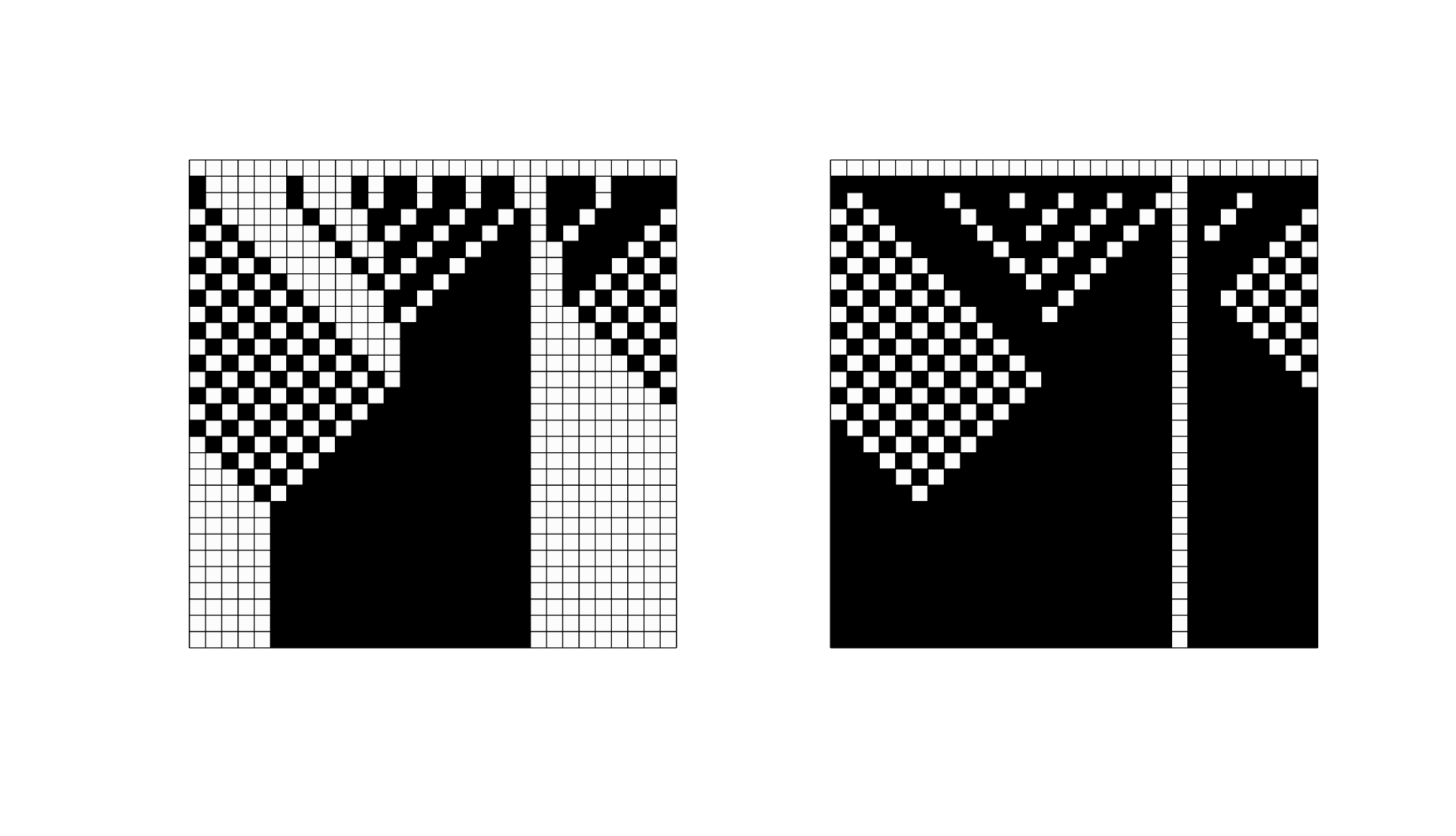}
\caption{Time evolution of $U_{j}^{n}$ (left) and $\tilde{V}_{j}^{n}$ (right) in the case that $L=1$.}\label{fig:L=1}
\end{center}
\begin{center}
\includegraphics[scale=0.4]{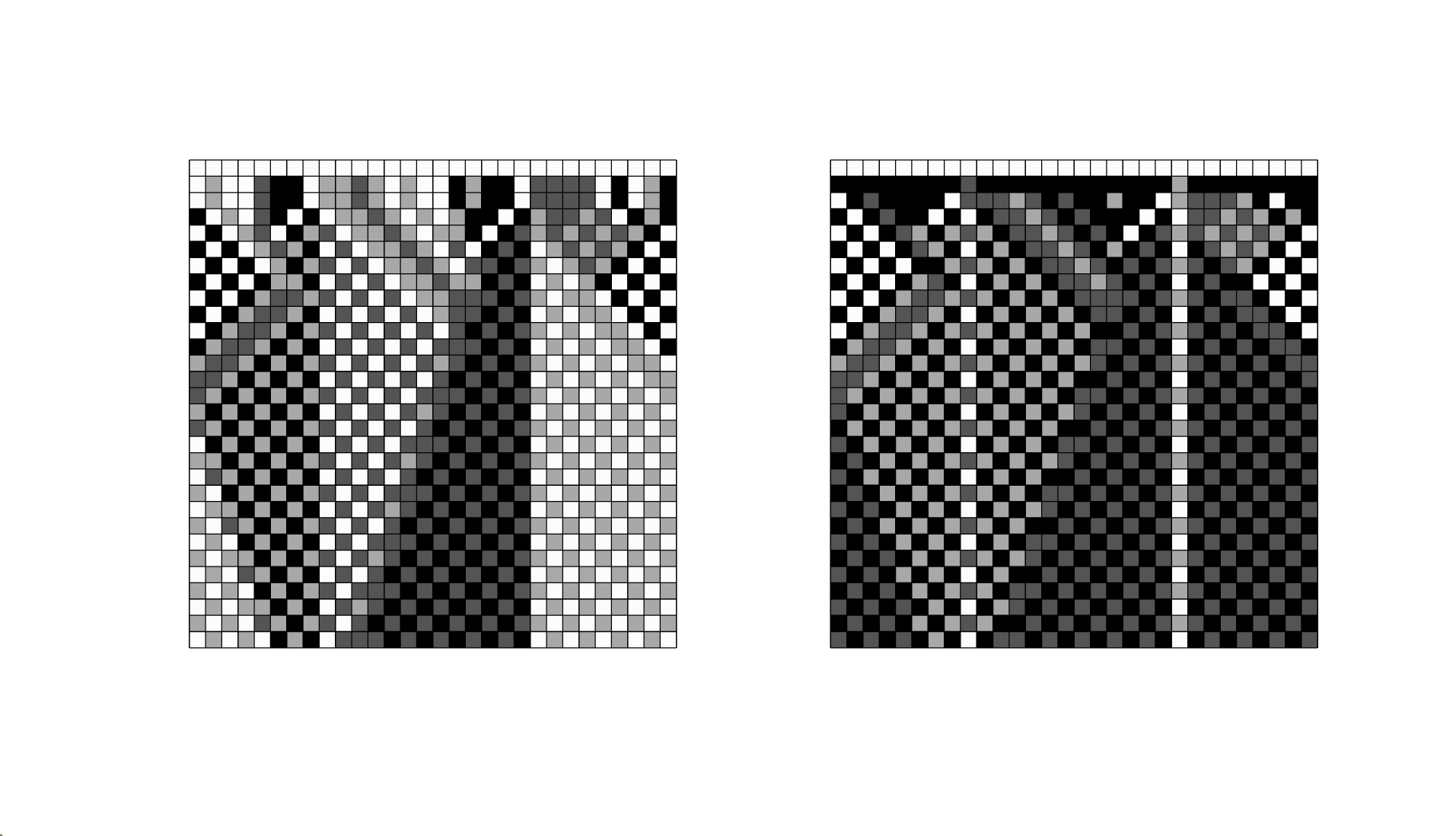}
\caption{Time evolution of $U_{j}^{n}$ (left) and $\tilde{V}_{j}^{n}$ (right) in the case that $L=3$.}\label{fig:L=3}
\end{center}
\end{figure}

In Figure \ref{fig:L=3}, we show the figure of the time evolution of $U_{j}^{n}$ and $\tilde{V}_{j}^{n}$ in the case that $L=3$. 
The white, light gray, dark gray and black cells  denote 0, 1, 2 and 3, respectively. 
The initial values of $\tilde{V}_{j}^{0}$
are set as
$\tilde{V}_{8}^{0}=2$,
$\tilde{V}_{21}^{0}=1$ and
$\tilde{V}_{j}^{0}=3$ for other $j$.
It can be observed that at most two cars are allowed to move forward per unit of time at the point where $\tilde{V}_{8}^{j}=2$, and at most one car at the point where $\tilde{V}_{22}^{j}=1$.
From the above, it can be said that the variable $\tilde{V}_{j}^{t}$ behaves like a traffic flow controller.
\subsection{Fundamental diagrams of the correlated Burgers cellular automata}
Let us consider the fundamental diagrams of the correlated Burgers cellular automata.
The density is denoted by
\begin{align*}
\rho =\dfrac{1}{NL} \sum_{j=0}^{N-1}U_{j}^{n},
\end{align*}
and the average flow $Q^{n}$ at the discrete time $n$ is defined as
\begin{align*}
Q^{n} &= \dfrac{1}{NL} \sum_{j=0}^{N-1}\min(V_{j+1}^{n-1},U_{j}^{n},L-U_{j+1}^{n})=\dfrac{1}{NL} \sum_{j=0}^{N-1}X_{j+1}^{n}.
\end{align*}
\par
Figures \ref{fig:kihonzu_L=1}, \ref{fig:kihonzu_L=2} and \ref{fig:kihonzu_L=3} show the fundamental diagrams in the case that $L=1,2$ and 3, respectively.
In these figures, we set $N=50$, and the initial values of $U_{j}^{0}$ and $\tilde{V}_{j}^{0}$ are given randomly by integer values between $[0,L]$ and $[1,L]$, respectively. 
Here we define $\tilde{V}_{\min}:=\min_{j=0,1,\dots,N-1}\tilde{V}_{j}^{0}$. 
Then, in these figures, $\tilde{V}_{\min}=1$.
Flows are computed at discrete time $n=100$ and are plotted in the figures.

\begin{figure}[t]
\begin{center}
\includegraphics[scale=0.5]{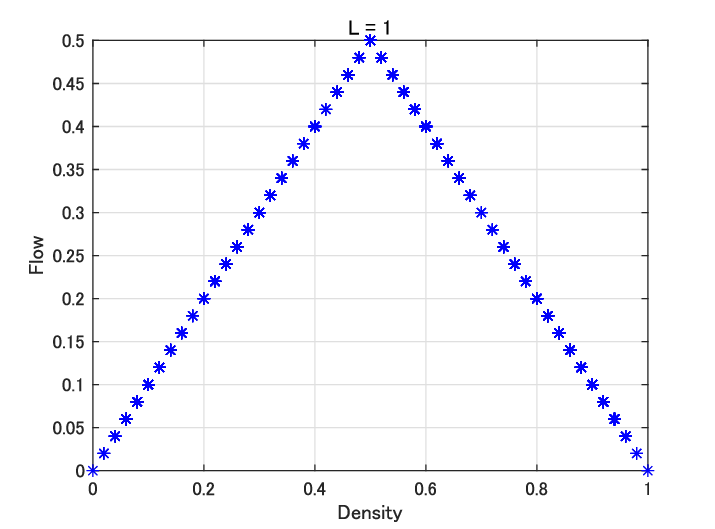}
\caption{Fundamental diagram of $L=1$. }\label{fig:kihonzu_L=1}
\end{center}
 \begin{minipage}{0.5\hsize}
  \begin{center}
   \includegraphics[scale=0.5]{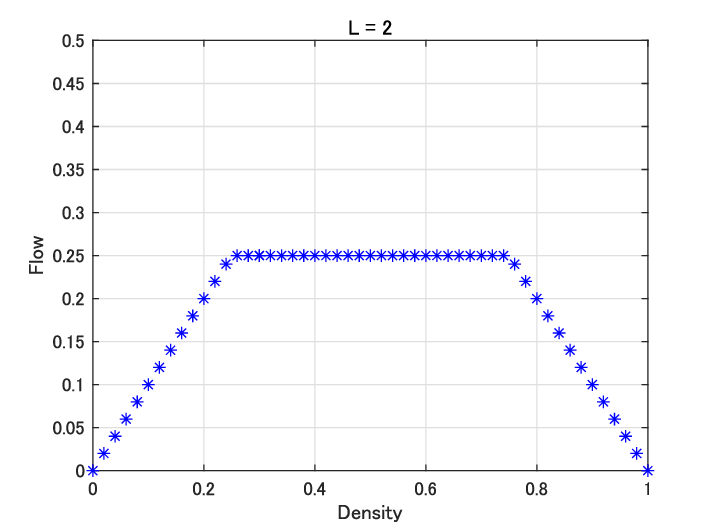}
  \end{center}
  \caption{Fundamental diagram of $L=2$.}
  \label{fig:kihonzu_L=2}
 \end{minipage}
 \begin{minipage}{0.5\hsize}
  \begin{center}
   \includegraphics[scale=0.5]{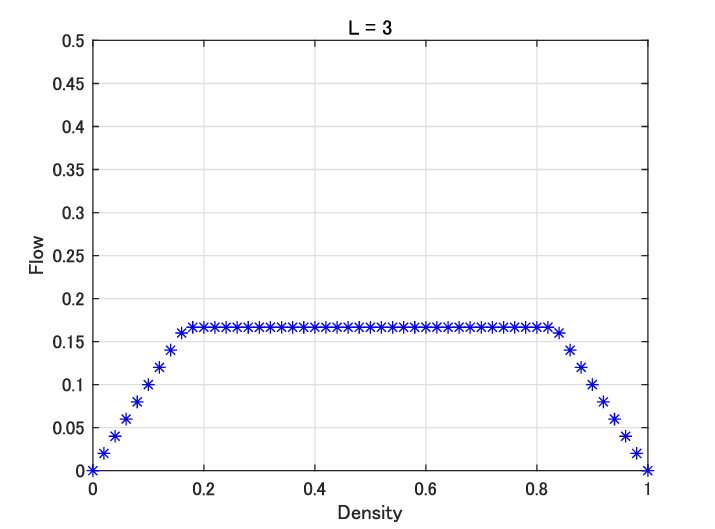}
  \end{center}
  \caption{Fundamental diagram of $L=3$.}
  \label{fig:kihonzu_L=3}
 \end{minipage}
\end{figure}

Figure \ref{fig:kihonzu_L=1} coincides with the fundamental diagram of the original traffic flow model of the elementary cellular automata of rule 184.
In Figures \ref{fig:kihonzu_L=2} and \ref{fig:kihonzu_L=3}, the values of the flow hit the ceiling and adopt a trapezoidal shape.
We can see the phase transitions at $\rho = 0.25$ and 0.75 in 
Figure \ref{fig:kihonzu_L=2} and $\rho = 0.1666...$ and 0.8333... in Figure \ref{fig:kihonzu_L=3}.

This result can be explained as follows.
Here we define the average velocity $v^{n}$ as
\begin{align}
v^{n}=\sum_{j=0}^{N-1}\dfrac{X_{j}^{n}}{U_{j}^{n}}.
\end{align}
The section in which the average velocity of cars is 1 is called free flow section and otherwise jam flow section.
In the following, we assume that after sufficient discrete time evolution, the whole $N$ sites will be either (i) only free flow section, (ii) mixed (free and jam) flow section, or (iii) only jam flow section. 
Let $v_{F}$ and $v_{J}$ be the average velocity in free and jam flow section, respectively. 
Let $\rho_{F}$ and $\rho_{J}$ be the density in free and jam flow section, respectively.  
Then, the density $\rho$ and the average flow $Q$ of the whole $N$ sites are given by
\begin{align}
&\rho = \rho_{J}x + \rho_{F}(1-x),\\
&Q = \rho_{J}v_{J}x + \rho_{F}v_{F}(1-x),\label{Q0}
\end{align}
where $x$ is the ratio of the length of the jam section to $N$. 
In the case where (i), since there are no jam flow section, $\rho=\rho_{F}$ and $v_{F}=1$ hold, then the average flow is $Q = \rho_{F}v_{F} = \rho_{F}=\rho$.
In the case where (ii), the right side of the site $p$ is free flow section and the left side is jam flow section, where $p \in \{0,1,\dots, N-1\}$ is a certain site which holds $\tilde{V}_{p}^{0}=\tilde{V}_{\min}$.
Then, the densities $\rho_{F}$ and $\rho_{J}$ are given by
\begin{align}
\rho_{F} = \dfrac{\tilde{V}_{\min}}{2L},
\quad \rho_{J} = \dfrac{2L-\tilde{V}_{\min}}{2L}=1-\rho_{F}.\label{rho_J}
\end{align}
The average velocities $v_{F}$, $v_{J}$ are given by
\begin{align}
v_{F} = \dfrac{\tilde{V}_{\min}}{\tilde{V}_{\min}} = 1,\quad 
v_{J} = \dfrac{\tilde{V}_{\min}}{2L-\tilde{V}_{\min}} = \dfrac{\rho_{F}}{1-\rho_{F}}.\label{V_J}
\end{align}
From \eqref{rho_J} and \eqref{V_J}, the average flow $Q$ can be written as
\begin{align}
Q = (1-\rho_{F})\dfrac{\rho_{F}}{1-\rho_{F}}x + \rho_{F}(1-x) = \rho_{F} = \dfrac{\tilde{V}_{\min}}{2L}
\end{align}
In the case where (iii), since there are no free flow section, $\rho=\rho_{J}$ holds and the average flow $Q$ is giben by
\begin{align}
Q=\rho_{J}v_{J} = \rho_{J}\dfrac{\tilde{V}_{\min}}{2L-\tilde{V}_{\min}} = 1-\rho_{J} =1-\rho.
\end{align}
Therefore, the fundamental diagram of the c-BCA is given by 
\begin{align}
Q = 
\left\{
\begin{array}{lcl}
\rho&\text{if}& 0\leq \rho \leq \dfrac{\tilde{V}_{\min}}{2L},\\
\dfrac{\tilde{V}_{\min}}{2L}&\text{if}& \dfrac{\tilde{V}_{\min}}{2L} \leq \rho \leq 1-\dfrac{\tilde{V}_{\min}}{2L},\\
1-\rho&\text{if}& 1-\dfrac{\tilde{V}_{\min}}{2L}\leq \rho \leq 1
\end{array}
\right.
\end{align}
Therefore, there exist two first-order phase transition of densities, and their points are given by $(\rho^{*},Q^{*})$ and $(1-\rho^{*},Q^{*})$, where 
\begin{align*}
\rho^{*} = Q^{*} = \dfrac{\tilde{V}_{\min}}{2L}.
\end{align*}
This result agrees with Figures \ref{fig:kihonzu_L=1}, \ref{fig:kihonzu_L=2} and \ref{fig:kihonzu_L=3}.

\section{Concluding remarks}
In this paper, we focus on the correlated random walk, which is a kind of generalization of the random walks, and we derive the discrete and ultradiscrete diffusion and Burgers equations and its corresponfing cellular automata. 
Thanks to the correlated random walk, our correlated Burgers cellular automata, the traffic flow model, behaves depending on the current and previous states. 
There are many studies on traffic flow models based on Burgers cellular automata and its extensions, as mentioned in Section 1. 
To the best of our knowledge, however, no studies have derived cellular automata model through a generalization of the random walk, which makes our present approach a highly original one.
In the study of the traffic flow model, it is important to analyze the fundamental diagram, and specifically, to identify the phase transition points.
We derive the fundermental diagram of the correlated Burgers cellular automata under the assumption of the convergence of dynamics. 
One of our future wors is to prove the convergence using the ultradiscrete correlated Burgers equation. 
The integrability of the conventional diffusion equation and the Burgers equation has already been studied, but that of our resulting equations has not yet been fully discussed, which will be another subject for a future work. 
A topic in continuous limit of the discrete correlated diffusion and discrete correlated Burgers will be written in a separate paper.

\section*{Acknowledgments}
The authors widh to thank Prof. Daichi Yanagisawa (The University of Tokyo) for valuable discussions.
This work was partially supported by the
Grant-in-Aid for Early-Career Scientists
no.20K14367 and
by the Grants-in-Aid for Scientific Research (C) nos.19K03616 and 19K03624 from the Japan Society for the Promotion of Science and the Research Origin for Dressed Photon.





\end{document}